\newtheorem{theorem}{Theorem}
\def\cast{{
   \mathord{
      \hbox to 0em{
         \ooalign{
	   \smash{\hbox{$\ast$}}\crcr
	   \smash{\hskip-1pt\Large\hbox{$\circ$}} }
	 \hidewidth}
      \phantom{\bigcirc}
} }}
\newcommand{\rH}{^{ \raisebox{1pt}{$\rm \scriptscriptstyle H$}}}
\newcommand{\rT}{^{ \raisebox{1.2pt}{$\rm \scriptstyle T$}}}
\newcommand{\bds}{\begin {itemize}}
\newcommand{\eds}{\end {itemize}}
\newcommand{\bdf}{\begin{definition}}
\newcommand{\blm}{\begin{lemma}}
\newcommand{\edf}{\end{definition}}
\newcommand{\elm}{\end{lemma}}
\newcommand{\bthm}{\begin{theorem}}
\newcommand{\ethm}{\end{theorem}}
\newcommand{\bprp}{\begin{prop}}
\newcommand{\eprp}{\end{prop}}
\newcommand{\bcl}{\begin{claim}}
\newcommand{\ecl}{\end{claim}}
\newcommand{\bcr}{\begin{coro}}
\newcommand{\ecr}{\end{coro}}
\newcommand{\bquest}{\begin{question}}
\newcommand{\equest}{\end{question}}
\newcommand{\larrow}{{\larrow}}
\newcommand{\argmin}{\ensuremath{\mathrm{arg}\min}}
\newcommand{\argmax}{\ensuremath{\mathrm{arg}\max}}
\newcommand{\cC}{{\ensuremath{\mathcal{C}}}}
\newcommand{\cI}{{\ensuremath{\mathcal{I}}}}
\newcommand{\cN}{{\ensuremath{\mathcal{N}}}}
\newcommand{\cP}{{\ensuremath{\mathcal{P}}}}
\newcommand{\cV}{{\ensuremath{\mathcal{V}}}}
\def\mbC{{\ensuremath{\mathbb C}}}
\def\mbE{{\ensuremath{\mathbb E}}}
\newcommand{\vb}{{\ensuremath{{\mathbf{b}}}}}
\newcommand{\vg}{{\ensuremath{{\mathbf{g}}}}}
\newcommand{\vh}{{\ensuremath{{\mathbf{h}}}}}
\newcommand{\vr}{{\ensuremath{{\mathbf{r}}}}}
\newcommand{\vs}{{\ensuremath{{\mathbf{s}}}}}
\newcommand{\vu}{{\ensuremath{{\mathbf{u}}}}}
\newcommand{\vv}{{\ensuremath{{\mathbf{v}}}}}
\newcommand{\vw}{{\ensuremath{{\mathbf{w}}}}}
\newcommand{\vz}{{\ensuremath{{\mathbf{z}}}}}
\newcommand{\mA}{{\ensuremath{\mathbf{A}}}}
\newcommand{\mB}{{\ensuremath{\mathbf{B}}}}
\newcommand{\mC}{{\ensuremath{\mathbf{C}}}}
\newcommand{\mF}{{\ensuremath{\mathbf{F}}}}
\newcommand{\mH}{{\ensuremath{\mathbf{H}}}}
\newcommand{\mI}{{\ensuremath{\mathbf{I}}}}
\newcommand{\mN}{{\ensuremath{\mathbf{N}}}}
\newcommand{\mP}{{\ensuremath{\mathbf{P}}}}
\newcommand{\mR}{{\ensuremath{\mathbf{R}}}}
\newcommand{\mS}{{\ensuremath{\mathbf{S}}}}
\newcommand{\mT}{{\ensuremath{\mathbf{T}}}}
\newcommand{\mW}{{\ensuremath{\mathbf{W}}}}
\newcommand{\mX}{{\ensuremath{\mathbf{X}}}}
\newcommand{\mY}{{\ensuremath{\mathbf{Y}}}}
\def\IC{\mathbb C}
\def\IN{\mathbb N}
\def\IZ{\mathbb Z}
\def\IR{\mathbb R}
\def\shat{^{\mathchoice{}{}%
 {\,\,\smash{\hbox{\lower4pt\hbox{$\widehat{\null}$}}}}%
 {\,\smash{\hbox{\lower3pt\hbox{$\hat{\null}$}}}}}}
\def\bSigma{{
      \ooalign{
      \smash{\hskip.4pt\raise.4pt\hbox{$\Sigma$}}\vphantom{}\crcr
      \smash{\hskip.7pt\raise.6pt\hbox{$\Sigma$}}\vphantom{}\crcr
      \smash{\hbox{$\Sigma$}}\vphantom{$\Sigma$}}
      \vphantom{\hbox{$\Sigma$}}
      }}
\def\bTheta{{
      \ooalign{
      \smash{\hskip.5pt\raise.5pt\hbox{$\Theta$}}\vphantom{}\crcr
      \smash{\hskip.0pt\raise.1pt\hbox{$\Theta$}}\vphantom{}\crcr
      \smash{\hbox{$\Theta$}}\vphantom{$\Theta$}}
      \vphantom{\hbox{$\Theta$}}
      }}
\def\bDelta{{
      \ooalign{
      \smash{\hskip.4pt\raise.4pt\hbox{$\Delta$}}\vphantom{}\crcr
      \smash{\hskip.7pt\raise.6pt\hbox{$\Delta$}}\vphantom{}\crcr
      \smash{\hbox{$\Delta$}}\vphantom{$\Delta$}}
      \vphantom{\hbox{$\Delta$}}
      }}
\def\bLambda{{
      \ooalign{
      \smash{\hskip.5pt\raise.5pt\hbox{$\Lambda$}}\vphantom{}\crcr
      \smash{\hskip.0pt\raise.1pt\hbox{$\Lambda$}}\vphantom{}\crcr
      \smash{\hbox{$\Lambda$}}\vphantom{$\Lambda$}}
      \vphantom{\hbox{$\Lambda$}}
      }}
\def\bordermatrix#1{\begingroup \m@th
  \@tempdima 8.75\p@
  \setbox\z@\vbox{%
    \def\cr{\crcr\noalign{\kern2\p@\global\let\cr\endline}}%
    \ialign{$##$\hfil\kern2\p@\kern\@tempdima&\thinspace\hfil$##$\hfil
      &&\quad\hfil$##$\hfil\crcr
      \omit\strut\hfil\crcr\noalign{\kern-\baselineskip}%
      #1\crcr\omit\strut\cr}}%
  \setbox\tw@\vbox{\unvcopy\z@\global\setbox\@ne\lastbox}%
  \setbox\tw@\hbox{\unhbox\@ne\unskip\global\setbox\@ne\lastbox}%
  \setbox\tw@\hbox{$\kern\wd\@ne\kern-\@tempdima\left[\kern-\wd\@ne
    \global\setbox\@ne\vbox{\box\@ne\kern2\p@}%
    \vcenter{\kern-\ht\@ne\unvbox\z@\kern-\baselineskip}\,\right]$}%
  \null\;\vbox{\kern\ht\@ne\box\tw@}\endgroup}
\def\argmin{\mathop{\operator@font arg\,min}}
\def\argmax{\mathop{\operator@font arg\,max}}
\newcommand{\bea}{\begin{array}}
\newcommand{\ena}{\end{array}}
\newcommand{\beq}{\begin{equation}}
\newcommand{\enq}{\end{equation}}
\newcommand{\beqa}{\begin{eqnarray}}
\newcommand{\enqa}{\end{eqnarray}}
\newcommand{\beqan}{\begin{eqnarray*}}
\newcommand{\enqan}{\end{eqnarray*}}
\newcommand{\AL}{\begin{enumerate}}
\newcommand{\ALE}{\end{enumerate}}
\def\addots{\mathinner{
    \mkern1mu\raise0pt\vbox{\kern7pt\hbox{.}}
    \mkern2mu\raise4pt\hbox{.}
    \mkern2mu\raise7pt\hbox{.}
    \mkern1mu}}
\def\sddots{\mathinner{
    \mkern.8mu\raise7pt\hbox{.}
    \mkern.8mu\raise4pt\hbox{.}
    \mkern.8mu\raise0pt\vbox{\kern7pt\hbox{.}}
    \mkern1mu}}
\def\saddots{\mathinner{
    \mkern.2mu\raise0pt\vbox{\kern7pt\hbox{.}}
    \mkern.2mu\raise4pt\hbox{.}
    \mkern.2mu\raise7pt\hbox{.}
    \mkern1mu}}
\def\bvr{\bar{\vr}}
\def\bvb{\bar{\vb}}
\def\sqplus{\mathbin{
	{\ooalign{\hfil\raise.3ex\hbox{\scriptsize
	+}\hfil\crcr\mathhexbox274\crcr\mathhexbox275}}
	}} 
\def\sqminus{\mathbin{
	{\ooalign{\hfil\raise.3ex\hbox{\scriptsize
	--}\hfil\crcr\mathhexbox274\crcr\mathhexbox275}}
	}}
\def\IC{{
   \mathord{
      \hbox to 0em{
	 \hskip-4pt
         \ooalign{
	   \smash{\hskip1.9pt\raise2.6pt\hbox{$\scriptscriptstyle |$}}\crcr
	   \smash{\hbox{\rm\sf C}} }
	 \hidewidth}
      \phantom{\hbox{\rm\sf C}}
} }}
\def\IN{
    {\ooalign{
   \smash{\hskip2.2pt\raise1.5pt\hbox{$\scriptscriptstyle |$}}\vphantom{}\crcr
   \hbox{\sf N}
	}}
	} 
\def\IZ{
    {\ooalign{
   \smash{\hskip1.9pt\raise0pt\hbox{$\sf Z$}}\vphantom{}\crcr
   \hbox{\sf Z}
	}}
	} 
\def\IR{
    {\ooalign{
   \smash{\hskip2.2pt\raise1.5pt\hbox{$\scriptscriptstyle |$}}\vphantom{}\crcr
   \smash{\hskip2.2pt\raise3.3pt\hbox{$\scriptscriptstyle |$}}\vphantom{}\crcr
   \hbox{\sf R}
	}}
	} 
\DeclareMathAlphabet{\mathcmb}{OT1}{cmr}{b}{n}
\def\bSigma{\ensuremath{\mathcmb{\Sigma}}}
\def\bLambda{\ensuremath{\mathcmb{\Lambda}}}
\def\bTheta{\ensuremath{\mathcmb{\Theta}}}
\newcommand{\SI}{\begin{indlist}}
\newcommand{\EI}{\end{indlist}}
\newcommand{\DL}{\begin{dashlist}}
\newcommand{\DLE}{\end{dashlist}}
\def\setboxz@h{\setbox\z@\hbox}
\def\wdz@{\wd\z@}
\def\boxz@{\box\z@}
\def\underset#1#2{\binrel@{#2}%
  \binrel@@{\mathop{\kern\z@#2}\limits_{#1}}}
\def\binrel@#1{\begingroup
  \setboxz@h{\thinmuskip0mu
    \medmuskip\m@ne mu\thickmuskip\@ne mu
    \setbox\tw@\hbox{$#1\m@th$}\kern-\wd\tw@
    ${}#1{}\m@th$}%
  \edef\@tempa{\endgroup\let\noexpand\binrel@@
    \ifdim\wdz@<\z@ \mathbin
    \else\ifdim\wdz@>\z@ \mathrel
    \else \relax\fi\fi}%
  \@tempa
}
\let\binrel@@\relax%
\newcommand{\Hbr}{\mbox{$\mH_{\rm br}$}}
\newcommand{\Hru}{\mbox{$\mH_{\rm ru}$}}
\newcommand{\Hbu}{\mbox{$\mH_{\rm bu}$}}
\def\BibTeX{{\rm B\kern-.05em{\sc i\kern-.025em b}\kern-.08em
    T\kern-.1667em\lower.7ex\hbox{E}\kern-.125emX}}
\begin{document}

\title{Joint Communication and Radar Sensing with Reconfigurable Intelligent Surfaces
}

\author{\IEEEauthorblockN{ R.S. Prasobh Sankar, Battu Deepak, and  Sundeep Prabhakar Chepuri}
\IEEEauthorblockA{ {\\Indian Institute of Science}, Bangalore, India }

}

\maketitle

\begin{abstract}

In this paper, we use a reconfigurable intelligent surface (RIS) to enhance the radar sensing and communication capabilities of a {\color{black} mmWave} dual function radar communication system. To simultaneously localize the target and to serve the user, we propose to adaptively partition the RIS by reserving separate RIS elements for sensing and communication. We design a multi-stage hierarchical codebook to localize the target while ensuring a strong communication link to the user. We also present a method to choose the number of times to transmit the same beam in each stage to achieve a desired target localization probability of error. The proposed algorithm typically requires fewer transmissions than an exhaustive search scheme to achieve a desired {\color{black} target} localization probability of error. Furthermore, the average spectral efficiency of the user with the proposed algorithm is found to be comparable to that of a RIS-assisted MIMO communication system without sensing capabilities and is much better than that of traditional MIMO systems without RIS.

\end{abstract}

\begin{IEEEkeywords}
Dual function radar communication system, joint communications and sensing, mmWave MIMO, reconfigurable intelligent surfaces. 
\end{IEEEkeywords}

\section{Introduction} \label{sec:intro}

Modern systems envisioned to jointly implement sensing and communication functionalities, also referred to as joint radar communication~(JRC) systems, are gaining significant attention for beyond 5G communications~\cite{mishra2019towards_mmWave,liu2020joint_radar,buzzi2019mimo_jcs}.  The main objective of a JRC system is to enable coexistence between the communication and sensing functionalities by ensuring reliable communications with the users while using the same spectrum for sensing and localizing targets. 

  Usual techniques to enable JRC include the design of beamformers to reduce interference between the radar and communication signals, embedding information symbols in radar waveforms, or using conventional communication symbols to carry out radar sensing~\cite{sodagari2012projection_jrc,liu2020joint_radar,mishra2019towards_mmWave}, to name a few.

A dual function radar communication base station (DFBS) is a type of JRC system that consists of a base station that can simultaneously communicate with the user equipment~(UE) and receive (and process) echo signals reflected from the targets.  A major challenge in operating at the mmWave and higher frequencies is the extreme pathloss. Achieving reasonable radar sensing and communication performance in mmWave DFBS systems thus require techniques to combat such harsh propagation environments. 

A new technology called  reconfigurable intelligent surfaces (RISs) has emerged as a popular choice for mmWave communications\cite{ozdogan2020using,najafi2020physicsbased} as RISs can be used to favourably modify the wireless propagation environment between the transmitter and the receiver.  RIS is a fully passive two-dimensional array consisting of sub-wavelength elements, which can be tuned remotely from the DFBS to introduce certain phase shifts to the electromagnetic waves incident on it and focus the signal energy in desired directions.
 
In this work, we consider a JRC system with a DFBS that communicates to the UE and performs radar sensing with assistance from an RIS. In particular, using the RIS, we sense a point target that is not directly visible to the DFBS and also improve the communication channel between the DFBS and UE.

\subsection{Major contributions} \label{sec:intro:contributions}
To simultaneously localize the target and to reliably communicate with the UE, we propose to  adaptively partition the RIS. During routine surveillance operations, we propose to use a small part of the RIS to transmit wide beams while using the majority of the RIS elements for communications. For target localization, we use a multi-stage hierarchical codebook to transmit beams that progressively get sharper as the stage progresses by reserving a larger portion of the RIS for localization.  We also present a strategy to choose the number of snapshots to be used in each stage to compensate for the varying array gain and to achieve the desired performance in terms of the probability of target localization error. The proposed algorithm typically requires fewer transmissions than an exhaustive search scheme to accomplish a given {\color{black} target} localization probability of error. The average spectral efficiency of the user with the proposed algorithm is found to be comparable to that of a RIS assisted MIMO communication system without sensing capabilities and is much better than that of traditional MIMO systems without RIS.

\section{System model} \label{sec:system}

Consider a setup with one DFBS, which simultaneously serves a user equipment (UE) and  localizes a radar target with assistance from a fully-passive RIS. The DFBS and UE have uniform linear arrays (ULAs) with, respectively, $N_{\rm b}$ and $N_{\rm u}$ elements that are half a wavelength apart.  The RIS is a {\color{black}uniform planar array} (UPA) with $N_{\rm r}$ phase shifters that are spaced less than half a wavelength.

Let us denote the array response vectors of the ULAs at the DFBS and UE towards an angle $\theta$ as $\vb(\theta) \in  \mbC^{N_{\rm b}}$ and $\vu(\theta) \in  \mbC^{N_{\rm u}}$, respectively. The $n$th entry of the steering vectors $\vb(\theta)$ and $\vu(\theta)$ is $e^{j (n-1) \pi \sin \, \theta}$. For an azimuth angle $\phi$ and elevation angle $\psi$, we can define the direction cosines $v_x = \sin\,\psi \sin \, \phi $ and  $v_y = \sin \, \psi \cos\,\phi$, and  collect them to form the direction cosine vector $\vv = [v_x,v_y]\rT$.  The array response vector of the RIS in the direction $\vv$, denoted by $\vr(\vv) \in  \mbC^{N_{\rm r}}$, admits a Kronecker product structure $\vr(\vv) = \vr_x(v_x) \otimes \vr_y(v_y)$ with $\vr_x(v_x) \in \mathbb{C}^{\sqrt{N_{\rm r}}}$ and $\vr_y(v_y) \in \mathbb{C}^{\sqrt{N_{\rm r}}} $ being the array steering vectors of the RIS along the horizontal and vertical directions, respectively. Here, $\otimes$ denotes the Kronecker product. Let us collect {\color{black}the RIS} phase shifts in $\boldsymbol{\omega} \in \mbC^{N_{\rm r}}$. We assume that the RIS phase shifts can also be factorized as  $\boldsymbol{\omega} =  \boldsymbol{\omega}_x \otimes \boldsymbol{\omega}_y$, where each entry of $\boldsymbol{\omega}_x \in \mathbb{C}^{\sqrt{N_{\rm r}}}$ and $\boldsymbol{\omega}_y \in \mathbb{C}^{\sqrt{N_{\rm r}}}$ is unit modulus. {\color{black}  We refer $\boldsymbol{\omega}_x$ and $\boldsymbol{\omega}_y$ as the RIS phase shifts corresponding to the horizontal and vertical directions, respectively.
}

Let us denote the DFBS-UE, DFBS-RIS, and RIS-UE MIMO channels as $\Hbu$, $\Hbr$, and $\Hru$, respectively. These channels are assumed to be known (or estimated~\cite{deepak2021mmWave}). In addition, they are assumed to have only the line-of-sight (LoS) component  {\color{black} due to the severe pathloss at the mmWave frequencies}~\cite{ghosh2014millimeter}.

\subsection{The downlink transmit signal}

The DFBS transmits two data streams collected in $\mS = [\vs_{\rm r},  \vs_{\rm u}]\rT \in \mbC^{2 \times T_s}$ with $T_s$ symbols towards the RIS and UE using a precoding matrix $\mF  = N_{\rm b}^{-1/2} [\vb(\theta_{\rm r}) \,\, \vb(\theta_{\rm u}) ] \in \mbC^{N_{\rm b} \times 2}$, where $\theta_{\rm r}$ and $\theta_{\rm u}$ are the angles of departure of the paths from the DFBS towards the RIS and UE, respectively. The data stream $\vs_{\rm r}$ for the RIS is allocated a power $p_r$ and the data stream $\vs_{\rm u}$ for the UE is allocated a power $p_u$ such that $P = p_r + p_u$. Then the signal transmitted from the DFBS is given by $\mX = \mF\mP\mS = \sqrt{\frac{p_r}{N_{\rm b}}} \vb(\theta_{\rm r}) \vs_{\rm r}\rT+ \sqrt{\frac{p_u}{N_{\rm b}}} \vb(\theta_{\rm u})  \vs_{\rm u}\rT$,
where $\mP = {\rm diag}(\sqrt{p_r} , \sqrt{p_u})$ is the power allocation matrix.

\subsection{The DFBS-RIS-target radar link}

We assume that there is no direct path between the DFBS and target, which we model as a point scatterer. We use the RIS to scan the scattering scene by forming focused beams based on the signals from the DFBS. The signal reflected from the scatterer is focused back to the DFBS via the RIS for processing.

Let $\vv_{\rm b} = [v_{\rm bx}, v_{\rm by}]\rT$ be the direction cosine vector of the path from the DFBS at the RIS. The channel matrix  
$\Hbr$ is given by  
\begin{equation}
\label{eq:hbr}
\Hbr =g_{\rm br}  \vr(\vv_{\rm b}) \vb\rH(\theta_{\rm r}),
\end{equation} 
where  $g_{\rm br} = \beta_{\rm br} \eta_{\rm br}$ is the path attenuation with $\beta_{\rm br}$ and  $\eta_{\rm br}$ being the small-scale and large-scale fadings, respectively. The target response matrix (or the RIS-target channel) for a scatterer in the scan direction $\vv_{\rm t} = [v_{\rm tx}, v_{\rm ty}]\rT$ with respect to the RIS is given by
\begin{equation}
\label{eq:targetresp}
\mT(\vv_{\rm t}) = \gamma\bvr(\vv_{\rm t}) \vr\rH(\vv_{\rm t}), 
\end{equation} 
where $\gamma = \rho \eta_{\rm rt}^2$ is the scattering coefficient {\color{black} with $\eta_{\rm rt}$ and $\rho~\sim~\cC \cN(0,1)$ denoting the large-scale fading and the radar cross section of the scatterer, respectively}. The signal reflected by the scatterer in the direction $\vv_{\rm t}$ and received at the DFBS via the RIS is given by
\begin{equation} \label{eq:radar1}
	\mY_{\rm r} = \mH_{\rm br}\rT\boldsymbol{\Omega}\rT \mT(\vv_{\rm t}) \boldsymbol{\Omega} \mH_{\rm br} \mX + \mN_{\rm r},
\end{equation}
where $\boldsymbol{\Omega} = {\rm diag} (\boldsymbol{\omega}_x \otimes \boldsymbol{\omega}_y)$ and $\mN_{\rm r}$ is the noise matrix at the DFBS whose entries are assumed to be i.i.d. complex Gaussian with zero mean and variance $\sigma_{\rm b}^2$. Using \eqref{eq:hbr} and \eqref{eq:targetresp}, and exploiting the Kronecker structure of the RIS UPA, \eqref{eq:radar1} can be simplified to $\mY_{\rm r} =  \gamma c_x(v_{\rm tx}) c_y(v_{\rm ty})\mB(\theta_{\rm r})\mX + \mN_{\rm r}$,
where the {\it squared spatial responses} of the RIS towards the scan directions $v_{\rm tx}$ and $v_{\rm ty}$ for the DFBS signal impinging on the RIS in the direction $\vv_{\rm b}$ are, respectively, $c_x(v_{\rm tx})= [\vr_x\rH(v_{\rm tx}) {\rm diag}({\boldsymbol \omega}_x ) \vr_x(v_{\rm bx})]^2$ and $c_y(v_{\rm ty})= [\vr_y\rH(v_{\rm ty}) {\rm diag}({\boldsymbol \omega}_y ) \vr_y(v_{\rm by})]^2$, and $\mB(\theta_{\rm r}) = g_{\rm br}^2\bvb(\theta_{\rm r}) \vb\rH(\theta_{\rm r}).$ 

\subsection{The DFBS-UE and DFBS-RIS-UE communication links}

The signal received at the UE, from the DFBS, is given by
 \begin{equation} \label{eq:rec_signal_ue}
 	\mY_{\rm u} = (\mH_{\rm bu} + \mH_{\rm ru}\boldsymbol{\Omega}\mH_{\rm br})\mX + \mN_{\rm u},	
 \end{equation}
where $\mN_{\rm u}$ is the noise matrix at the UE whose entries are assumed to be i.i.d. complex Gaussian with zero mean and variance $\sigma_{\rm u}^2$.
The channel matrices are defined as $\mH_{\rm bu} = g_{\rm bu}\vu(\zeta_{\rm b})\vb\rH(\theta_u) \in \mbC^{N_{\rm u} \times N_{\rm b}}$ and $\mH_{\rm ru} = g_{\rm ru}\vu(\zeta_{\rm r})\vr \rH (\vv_{\rm u}) \in \mbC^{N_{\rm u} \times N_{\rm r}} $, where $\zeta_{\rm r}$ and $\zeta_{\rm b}$ denote the angles of arrival of the paths from the RIS and DFBS at the UE, respectively, and $\vv_{\rm u}=[v_{ux},v_{uy}]\rT$ is the direction cosine vector of the UE at RIS.  The received signal is beamformed using the combiner $\mC =  N_{\rm u}^{-1/2} [\vu(\zeta_{\rm r}) \,\, \vu(\zeta_{\rm b}) ] \in \mbC^{N_{\rm u} \times 2}$.

The main aim of this paper is to localize the target by estimating the direction cosine vector $\vv_{\rm t}$ at the DFBS while ensuring a rank-2 communication {\color{black}channel $\mH_{\rm bu} + \mH_{\rm ru}\boldsymbol{\Omega}\mH_{\rm br}$} between the DFBS and UE.

\section{Codebook design for localization} \label{sec:codebook_and_loc}

In this section, we present the proposed RIS phase shift design  for joint communication and sensing, and present the proposed target localization algorithm. We design the RIS phase shifts to simultaneously scan  the target while ensuring a rank-2 channel towards the UE. To enable this coexistence, we propose to adaptively partition and allocate the RIS elements for communications and sensing by operating in one of the two modes, namely, the {\it surveillance mode} or {\it localization mode}. The proposed partitioning scheme is illustrated in Fig.~\ref{fig1:merge}a. 

In the surveillance mode, to detect the presence of a target, we use widebeams formed using a small number of RIS elements to scan large sectors. The remaining elements are used to serve the UE. If the target is not always present, then we mostly operate in the surveillance mode. In the localization mode, instead of exhaustively searching in different scan directions with fine pencil beams, we  start with a widebeam and progressively make the beams narrower using a larger number of RIS elements to localize the target precisely. Partitioning the RIS elements for sensing and communications results in a fundamental tradeoff between the localization accuracy and the communication spectral efficiency. To perform beam search, we extend the hierarchical codebook based approach from~\cite{alkhateeb2014channel} to perform target localization. 

A disadvantage of allocating fewer RIS elements for localization in the initial stages is the limited array gain, which, nonetheless, can be compensated by performing multiple transmissions with the same beam.

\begin{figure}
\centering
	\includegraphics[scale = 0.35]{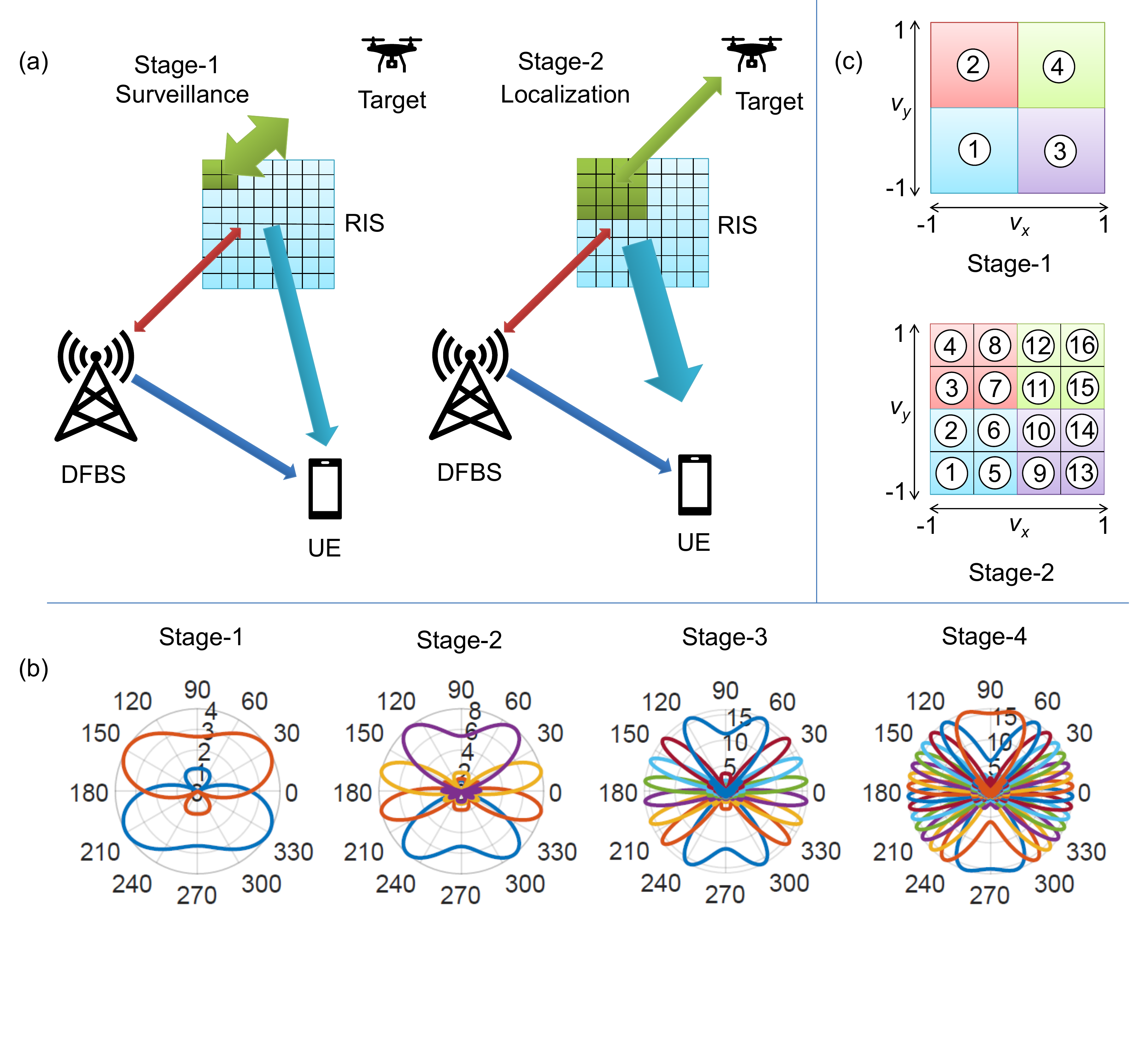}

	\caption{\small (a) Illustration of the partitioning of the RIS for localization and communications. (b) Beam patterns of the first 4 stages of the codebook {\color{black} for localization} (horizontal direction). (c) Partitioning of the $v_x-v_y$ grid for the first two stages. }
	\label{fig1:merge}
	
\end{figure}

\subsection{Codebook design} \label{sec:codebook}

Let us assume that the target direction cosines lie on a discrete grid of $D$ points $\cV = \{v_1, v_2, \ldots, v_D\}$, i.e., $v_x, v_y \in \cV$. The beams for the hierarchical codebook are designed such that a certain portion of the $v_x - v_y$ direction cosine grid is illuminated. We partition the RIS and design separate beams for localization and communication. From the assumed Kronecker structure of the RIS phase shifts $\boldsymbol{\omega} = \boldsymbol{\omega}_x \otimes \boldsymbol{\omega}_y$, we adopt a  simple design procedure to  design  $\boldsymbol{\omega}_x$ and $\boldsymbol{\omega}_y$ independently. 
Next we discuss the procedure to design a unit modulus phase shift $\vw \in \mbC^{\sqrt{N_{\rm r}}}$ that represents either ${\boldsymbol \omega}_x$ or ${\boldsymbol \omega}_y$.

 Let $L_s$ and $C_s$ denote the number of elements of $\vw$ used for localization and communications in the $s$th stage, respectively, such that $L_s + C_s = \sqrt{N_{\rm r}}$. At each stage, we partition $\cV$ into disjoint sets and design the beam to scan a specific angular sector. Particularly, for the $s$th stage, we have $2^s$ partitions with the $i$th partition corresponding to the grid indices $\cI_{s,i}=\left\{  \frac{D}{2^s}(i-1) + 1,  \frac{D}{2^s}(i-1) + 2, \ldots,  \frac{D}{2^s}i   \right\}$ for $i=1,2,\ldots, 2^s$. Let $\vw_{s,i} = \begin{bmatrix}
	\vg_{s,i}\rT & \vh_{s,i}\rT
\end{bmatrix}\rT  \in \mbC^{\sqrt{N_{\rm r}}}$ denote the $i$th beam in the $s$th stage with $\vg_{s,i} \in \mbC^{L_s}$ and $\vh_{s,i} \in \mbC^{C_s}$, being the part of the beamformer corresponding to the radar and communications, respectively. To design the beamformer to scan a certain partition of the direction cosine space, we choose $\vg_{s,i}$ such that
\begin{equation} \label{eq:w_design_0}
     \vr_x\rH(v_j){\rm diag}(\vg_{s,i})\vr_x(v_{bx}) = \begin{cases}
     	L_s \quad &\text{for} \quad v_j \in \cI_{s,i} \\ 0 \quad &\text{otherwise}
     \end{cases},
\end{equation}
where $j=1,2,\ldots,D$. Defining the RIS array response corresponding to the directions in $\cV$ as $\mR = [\vr_x(v_1), \ldots , \vr_x(v_D)] \in \mbC^{L_s \times D}$, ~\eqref{eq:w_design_0} can be written as
\begin{equation} \label{eq:w_design_1}
	\mR\rH {\rm diag}(\vg_{s,i})\vr_x(v_{bx}) = L_s{\mathbbm{1} }_{s,i},
\end{equation} 
where $\mathbbm{1}_{s,i} \in \{1,0\}^{D}$ is an indicator vector containing ones at the indices corresponding to $\cI_{s,i}$ and zeros elsewhere. Let $\circ$ denote the Khatri-Rao product. Then, using the property ${\rm vec}(\mA {\rm diag}(\vg) \mB) = (\mB\rT \circ \mA) \vg$, we have $(\vr_x\rT(v_{bx}) \circ \mR\rH) \vg_{s,i} =   L_s {\mathbbm{1} }_{s,i}.$

Now, the design of $\vg_{s,i}$ can be stated as the solution to the following optimization problem
\begin{align} 
	&\underset{\vg_{s,i} \in \mathbb{C}^{L_s}  }{\rm minimize} \quad  \Vert (\vr_x\rT(v_{bx}) \circ \mR\rH) \vg_{s,i} -  L_s \mathbbm{1}_{s,i}  \Vert^2  \nonumber  \\
	& \text{subject to } \quad  \vert [\vg_{s,i}]_m \vert = 1,~m=1,2,\ldots,L_s,
	\label{eq:des_w_x}
\end{align}
where we have unit modulus constraints because the RIS has only  passive elements. The above problem can be iteratively solved using updates of the form~\cite[Algorithm 1]{tranter2017fast_uls}
\[
\vg_{s,i}^{(k+1)} = {\rm exp} \left( j\, {\rm angle} \left\{ \vg_{s,i}^{(k)} + \mu \mA\rH ( L_s \mathbbm{1}_{s,i} - \mA\vg_{s,i}^{(k)} )    \right\} \right),
\]
where $\mA = (\vr_x\rT(v_{bx}) \circ \mR\rH)$ and $\mu$ is the step size. The beampatterns for first few stages are shown in   Fig. \ref{fig1:merge}b, where we use $(L_1,L_2,L_3,L_4) = (4,8,16,16)$. We can observe that the beams become narrower as the stage index increases. By reserving only a small number of RIS elements, we obtain sufficiently wide beam{\color{black}s} in the initial stages, albeit a smaller array gain. 

The design of the phase shifts for the communication part of the RIS, i.e., $\vh_{s,i}$, to form a pencil beam towards the UE to ensure a rank-2 communication channel  $\mH_{\rm bu} + \mH_{\rm ru}{\rm diag}(\vw_{s,i})\mH_{\rm br}$, is relatively straightforward.
Using the Kronecker factorization of $\vw$, the gain offered by the RIS elements reserved for communication to the UE in the horizontal direction can be computed as ${\color{black}c_{ux}} = {\vr}_{xc}\rH(v_{ux}) {\rm diag}(\vh_{s,i}) {\vr}_{xc}(v_{bx})  $, where ${\vr}_{xc}(.) \in \mbC^{C_s}$ denotes the last $C_s$ elements of the RIS array response vector $\vr_x(.) \in \mbC^{\sqrt{N_{\rm r}}}$. To focus the entire signal energy impinging on the RIS from direction $v_{bx}$ to the direction $v_{ux}$, we choose the phase shifts such that  {\color{black}$c_{ux}$} is maximized. It can be shown that { \color{black} $c_{ux}$ } attains its maximum value $C_s$ when $\vh_{s,i} = \bar{ \vr}_{xc}(v_{bx}) \odot \vr_{xc}(v_{ux})$, where $\odot$ is the Hadamard product. 

Let us denote the possible beams in the $s$th stage  as $\mW_{s} = \begin{bmatrix}
	\vw_{s,1} & \vw_{s,2} & \ldots & \vw_{s,2^s}
\end{bmatrix} \in \mbC^{\sqrt{N_{\rm r}} \times 2^s}  $.
Then the desired set of two-dimensional beams in the $s$th stage, which illuminates a certain partition of the $v_x - v_y$ grid while forming a pencil beam to the UE  is given by $	\boldsymbol{\Omega}_s = \begin{bmatrix}
	\boldsymbol{\omega}_{s,1} & \ldots & \boldsymbol{\omega}_{s,4^s} \end{bmatrix} = \mW_{s} \otimes \mW_{s} \in \mbC^{N_{\rm r} \times 4^s}$,
where $\boldsymbol{\omega}_{s,i} \in \mbC^{N_{\rm r}}$ denotes the RIS phase shifts corresponding to the $i$th beam in the $s$th stage. Since each beam $\vw_{s,i}$ essentially scans a certain partition in the $v_x$ (or $v_y$) space, each column of $\mW_s$ scans a certain partition of the $v_x - v_y$ grid. 
For example, since $\vw_{1,1}$ covers the range $-1 \leq v_x \leq 0$ (or $-1 \leq v_y \leq 0$), $\boldsymbol{\omega}_{(1,1)} = \vw_{1,1} \otimes \vw_{1,1}$ covers the range $-1 \leq v_x,v_y \leq 0$, as illustrated in Fig. \ref{fig1:merge}c. Using the designed beams, we now discuss the target localization algorithm.

\subsection{Target localization} \label{eq:target_loc}
In the first stage, we transmit four beams corresponding to the four columns of $\boldsymbol{\Omega}_1$ and compare the power of the  received signal at the DFBS. The partition of the $v_x-v_y$ space corresponding to the largest received power is selected and that is further divided into four more parts. In the second stage, we then transmit four beams corresponding to this partition. For example, if the 3rd beam in the first stage resulted in the largest received power, we will be selecting beams $\{9,10,13,14\}$ for the second stage as illustrated in Fig. \ref{fig1:merge}c. This process is then repeated for other stages.

In general, let $\cP_s = \{p_{s1}, p_{s2}, p_{s3}, p_{s4}\}$ be the indices of the beams to be transmitted during the $s$th stage.
 Then the signal received at the DFBS related to the $i$th beam is given by $	\mY_{\rm r}^{(s,i)} = \mH_{\rm br}\rT{\rm diag}(\boldsymbol{\omega}_{s,{p_{si}}})\rT \mT(\vv_{\rm t}) {\rm diag}(\boldsymbol{\omega}_{s,{p_{si}}}) \mH_{\rm br} \mX + \mN_{\rm r}$ for $i=1,2,3,4$.
This signal is then beamformed using $\bvb(\theta_r)$ and correlated with the transmitted signal $\vs_{\rm r}\rT$ to obtain $\vz_{s,i} = \left(\vb\rT(\theta_r)\mY_{\rm r}^{(s,i)} {\rm diag}( {\bar{\vs}_{\rm r}}) \right)\rT \in \mbC^{T_s},~i=1,2,3,4$. We now identify the partition of the $v_x-v_y$ space with the largest received power by computing the average power received for each beam as $\tau_s = \underset{i}{ \text{arg max}  }\quad \left\vert \frac{1}{T_s}\sum_{m=1}^{T_s}   [\vz_{s,i}]_m \right\vert^2$.
Based on the value of $\tau_s$, we then choose the set of indices for the   $(s+1)$th stage, $\cP_{s+1}$, and so on. We finally estimate the target location $\hat{\vv}_t$ from the partition chosen at the last stage $N_{\rm s} = {\rm log_2} N_{\rm d} $.

 Due to the limited array gain for localization in the initial stages of the codebook, an appropriate selection of $T_s$ is crucial for the performance of the proposed algorithm. Next, we  present as a theorem a way to select the number of snapshots $T_s$ in each stage to achieve a desired {\color{black} target localization} probability of error.

\subsection{Design for the desired probability of error} \label{sec:performance}
  Let $(i,j)$ and $(\hat{i},\hat{j})$ denote the position of the partition in the $v_x-v_y$ grid corresponding to the true direction cosine $\vv_{\rm t}$ and the estimated direction cosine $\hat{\vv}_{\rm t}$, respectively. The  target localization probability of error  is defined as $	E = {\rm Pr}( (\hat{i},\hat{j})  \neq (i,j) )$. Similarly, let $(i_s,j_s)$ and $(\hat{i}_s,\hat{j}_s)$ represent the true and estimated partitions of the $v_x-v_y$ grid in the $s$th stage of the proposed algorithm. Then the probability of error in the $s$th stage is defined as $E_s = {\rm Pr} ((\hat{i}_s,\hat{j}_s) \neq ( i_s, j_s ) )$.
  
\begin{theorem}
	Let us assume that the horizontal and the vertical beamformers satisfy~\eqref{eq:w_design_1}. Then the probability of error at each stage, $E_s$, is less than or equal to $\delta$ if the  power allocated towards the RIS, $p_r$, and the number of transmissions  $T_s$ satisfies
	\begin{equation} \label{eq:power_alloc}
		p_rT_s = \left(\frac{\kappa}{1-\kappa}\right)\frac{\sigma_{b}^2}{N_{\rm b}^2 L_s^8 \eta_{\rm br}^2 \eta_{\rm rt}^2  },
	\end{equation}
	where $\kappa = 2(1-\frac{2}{3}\delta)$.
\end{theorem}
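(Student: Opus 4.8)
\emph{Proof strategy.} The plan is to reduce the four per-stage decision statistics to independent scalar random variables with explicit distributions, and then to control $E_s$ by a union bound over the three ``wrong'' beams.

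First I would use the design hypothesis \eqref{eq:w_design_1} together with the Kronecker structure $\boldsymbol{\Omega}_s = \mW_s \otimes \mW_s$ to evaluate the squared spatial responses in the simplified model $\mY_{\rm r}^{(s,i)} = \gamma\, c_x(v_{\rm tx})\, c_y(v_{\rm ty})\, \mB(\theta_{\rm r})\mX + \mN_{\rm r}$. Since \eqref{eq:w_design_1} forces the unsquared localization response to equal $L_s$ on a beam's partition and $0$ off it, the product $c_x c_y$ equals $L_s^4$ for the single beam whose two-dimensional partition contains $\vv_{\rm t}$ and vanishes for the other three (one of the two factors is zero). Applying the matched beamformer $\bvb(\theta_{\rm r})$, using $\vb\rH(\theta_{\rm r})\vb(\theta_{\rm r}) = N_{\rm b}$ and the approximate steering-vector orthogonality $\vb\rH(\theta_{\rm r})\vb(\theta_{\rm u}) \approx 0$, and correlating with $\bar{\vs}_{\rm r}$ (unit-modulus symbols), the averaged statistic $\tfrac{1}{T_s}\sum_m [\vz_{s,i}]_m$ becomes a deterministic gain times $\gamma$ plus a zero-mean Gaussian term for the correct beam, and pure noise for each wrong beam.

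Next I would pin down the distributions. The processed noise is circularly symmetric Gaussian with variance $\sigma_w^2 \propto \sigma_{\rm b}^2/T_s$, and because the target coefficient carries $\rho \sim \cC\cN(0,1)$, averaging over the radar cross section (a Swerling-I--type fluctuation) makes the correct-beam statistic the squared modulus of a $\cC\cN$ with variance ``signal power $+\,\sigma_w^2$'' and each wrong-beam statistic the squared modulus of a $\cC\cN$ with variance $\sigma_w^2$. The four statistics are therefore independent exponential random variables, with means $\Sigma_1 = P_{\rm sig} + \sigma_w^2$ and $\Sigma_2 = \sigma_w^2$, where collecting the beamforming and path gains gives $P_{\rm sig} \propto N_{\rm b}^2 L_s^8 \eta_{\rm br}^2 \eta_{\rm rt}^2\, p_r$.

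Finally, I would write $E_s = \Pr(\text{some wrong beam beats the correct one})$ and bound it by $3$ times the elementary crossing probability of two independent exponentials, $\Pr(Y > X) = \Sigma_2/(\Sigma_1 + \Sigma_2)$. Imposing $E_s \le \delta$ and solving the resulting linear inequality for $p_r T_s$ produces \eqref{eq:power_alloc}, with the factor $\kappa/(1-\kappa)$ (hence the stated $\kappa = 2(1-\tfrac{2}{3}\delta)$) coming out of the crossing-probability constant and the union-bound multiplicity, while $\sigma_w^2 \propto \sigma_{\rm b}^2/T_s$ and the explicit form of $P_{\rm sig}$ supply the remaining factors $N_{\rm b}^2$, $L_s^8$, $\eta_{\rm br}^2$, $\eta_{\rm rt}^2$. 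The hard part will be the second step: propagating every array and path gain ($N_{\rm b}$, $L_s$, $g_{\rm br}$, $\eta_{\rm rt}$) exactly through the beamforming and correlation, and justifying the average over $\rho$ so that the statistics are genuinely exponential; once that is done, the union bound and the closing algebra are routine.
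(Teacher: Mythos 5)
Your overall strategy coincides with the paper's: the paper's proof is a citation to Alkhateeb et al.\ (Theorem~1, Corollary~2) and to Simon--Alouini, and what those references do in this exact setting is your reduction --- ideal, non-overlapping beams guaranteed by \eqref{eq:w_design_1} (so that $c_x c_y = L_s^4$ for the one correct beam and $0$ for the other three), the Rayleigh fluctuation $\rho \sim \cC\cN(0,1)$ turning the four decision statistics into independent exponentials, and a union bound over the three wrong beams. So on the level of ideas you and the paper agree, and your reconstruction of the signal/noise bookkeeping (the $N_{\rm b}^2$, $L_s^8$, $\eta_{\rm br}^2\eta_{\rm rt}^2$ factors, and $\sigma_w^2 \propto \sigma_{\rm b}^2/T_s$) is the right one.

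The gap is in the step you wave off as ``routine'': the closing algebra does \emph{not} produce \eqref{eq:power_alloc} as stated. With $\Sigma_1 = P_{\rm sig}+\sigma_w^2$ and $\Sigma_2 = \sigma_w^2$, your union bound is $E_s \le 3\sigma_w^2/(P_{\rm sig}+2\sigma_w^2)$; setting this equal to $\delta$ gives $P_{\rm sig}/\sigma_w^2 = (3-2\delta)/\delta$, i.e.
\begin{equation*}
p_r T_s \;=\; \frac{3-2\delta}{\delta}\,\frac{\sigma_{b}^2}{N_{\rm b}^2 L_s^8 \eta_{\rm br}^2 \eta_{\rm rt}^2}
\;=\; \frac{\kappa}{1-\kappa/2}\,\frac{\sigma_{b}^2}{N_{\rm b}^2 L_s^8 \eta_{\rm br}^2 \eta_{\rm rt}^2},
\qquad \kappa = 2\Bigl(1-\tfrac{2}{3}\delta\Bigr),
\end{equation*}
which is $\kappa/(1-\kappa/2)$, not the theorem's $\kappa/(1-\kappa)$. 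The difference is not cosmetic: for any $\delta < 3/4$ one has $\kappa = 2(1-\tfrac{2}{3}\delta) > 1$, so $\kappa/(1-\kappa) < 0$ and the right-hand side of \eqref{eq:power_alloc} is negative; it therefore cannot be the output of your argument (or of any bound of this type, in which $\kappa/(1-\kappa)$ plays the role of a nonnegative SNR). A constant of the literal form $\kappa/(1-\kappa)$ does arise if one instead uses a pairwise bound of the form $\tfrac12\bigl(1-\sqrt{\bar\gamma/(1+\bar\gamma)}\bigr)$, but then $\kappa = (1-\tfrac{2}{3}\delta)^2$, and that bound belongs to coherent detection, not to the noncoherent energy comparison your (and the paper's) statistics actually perform. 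So you must either exhibit a per-stage error bound whose inversion genuinely yields the stated constant, or note explicitly that your derivation yields $(3-2\delta)/\delta$ and that the theorem's $\kappa/(1-\kappa)$ appears to be a misprint of $\kappa/(1-\kappa/2)$. Claiming that the stated $\kappa$ ``comes out of the crossing-probability constant and the union-bound multiplicity'' papers over precisely the point where the statement and the argument disagree.
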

\begin{proof}
	Using the results from~\cite[equations (31)-(40), Theorem 1]{alkhateeb2014channel} and~\cite[equations (22), (24)-(35)]{simon1998unified}, we can obtain an upper bound on the average error probability for the $s$th stage, $E_s$. 
	When the horizontal and vertical codebook  satisfies~\eqref{eq:w_design_1}, the beams designed in each stage partition the $v_x-v_y$ plane into $4^{s}$ non-overlapping parts as shown in Fig. \ref{fig1:merge}c. This allows us to obtain a simplified expression for $E_s$ similar to the one obtained in~\cite[Corollary 2]{alkhateeb2014channel}.  To ensure that the probability of error in each stage is bounded, we can set the upper bound obtained to $\delta$. 
\end{proof}
Since the probability of error in each stage is upper bounded by $\delta$, we can show that the overall probability of error is upper bounded by $N_{\rm s}\delta$ using the union bound. Hence, given $p_r$, Theorem 1 allows us to  select $T_s$ to achieve a desired target localization performance.

\section{Numerical experiments} \label{sec:numerical_experiments}

\begin{table}[t]
	\begin{center}
		\small
		\begin{tabular}{|l | l|} 
			\hline
			($N_{\rm b}, N_{\rm u}, N_{\rm r}$, $D$) & ($64$, $16$, $64 \times 64, 32$) \\ \hline
			($\theta_{\rm r}, \theta_{\rm u}, \zeta_{\rm b}, \zeta_{\rm r}$)
		& ($45^0, -25^0, -30^0, 25^0   $) \\
		\hline
		($v_{ux}, v_{uy}, v_{bx}, v_{by}$) & ($ 0.105 , -0.343, 0.133, -0.112 $)	 \\
		\hline
		($d_{\rm bu}, d_{\rm br}, d_{\rm ru}, d_{\rm rt}$) & ($20~{\rm m},10~{\rm m},10~{\rm m},5~{\rm m}$) \\ \hline
		($\alpha_{\rm bu},\alpha_{\rm ru},\alpha_{\rm rt}, \alpha_{\rm br}$) & ($3.5,2.8,2.8,2.5$) \\ \hline
		($\sigma_{b}^2, \sigma_{u}^2$) & ($-94~{\rm dBm}, -80~{\rm dBm}$) \\ \hline	 		
				\end{tabular}
	\end{center}

	\caption{\small  Parameters used for the simulations.}
	\label{table:2}

\end{table}



\begin{figure*}[t]
\begin{subfigure}[c]{0.5\columnwidth}\centering
   \includegraphics[width=\columnwidth]{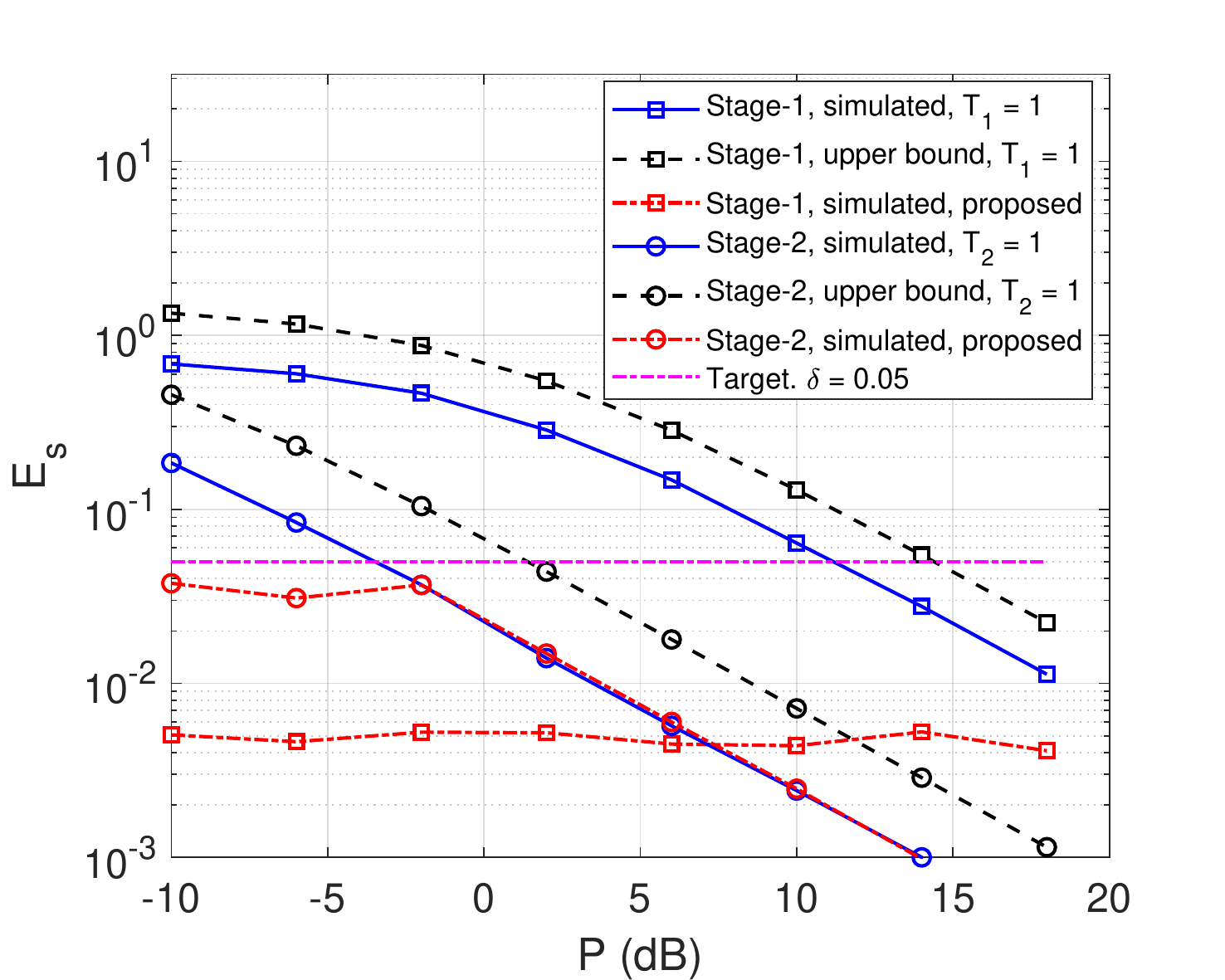}
   \caption{}
   \label{fig2a}
\end{subfigure}
~
\begin{subfigure}[c]{0.5\columnwidth} \centering
   \includegraphics[width=\columnwidth]{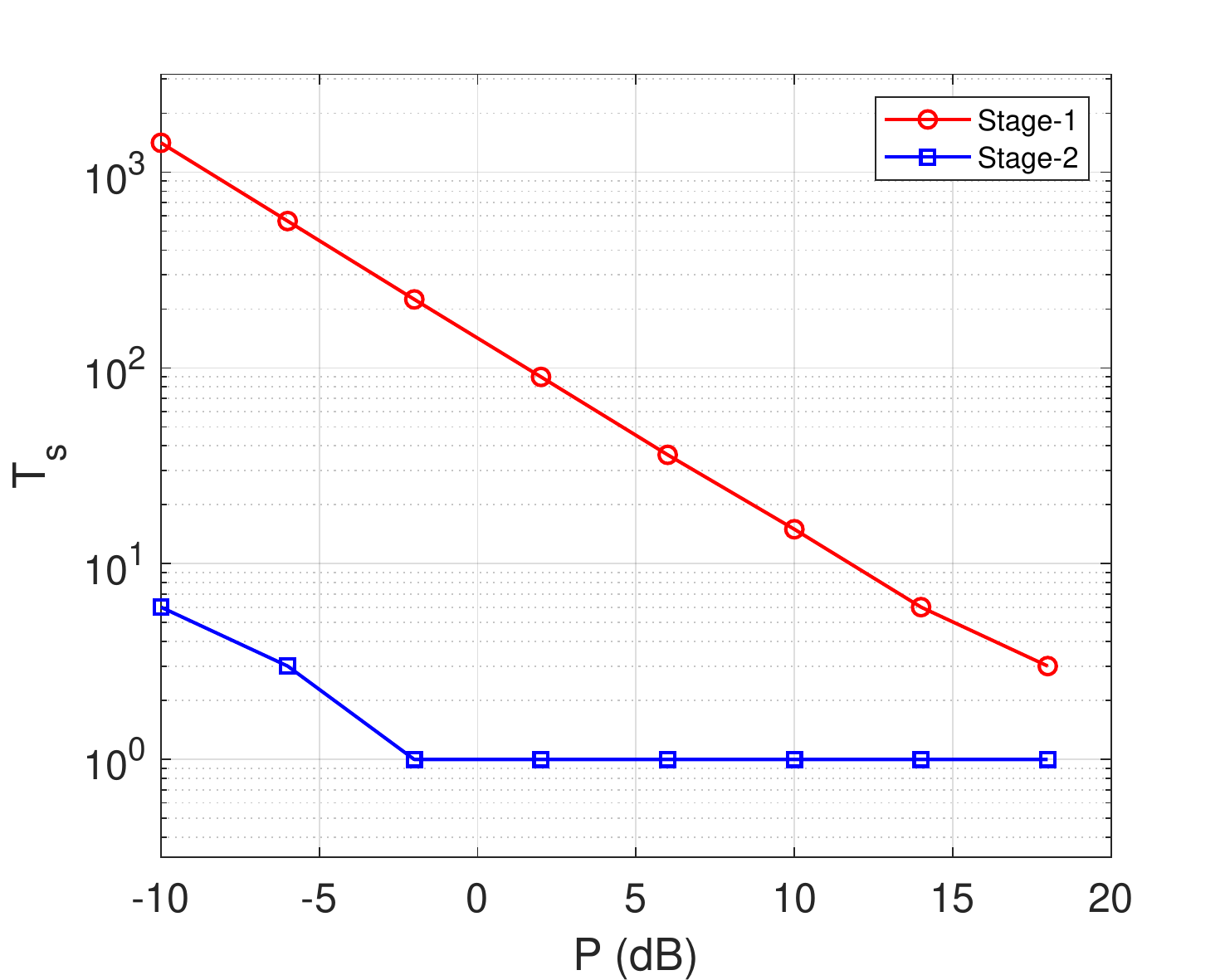}
   \caption{}
   \label{fig2b} 
\end{subfigure}
~
\begin{subfigure}[c]{0.5\columnwidth} \centering
   \includegraphics[width=\columnwidth]{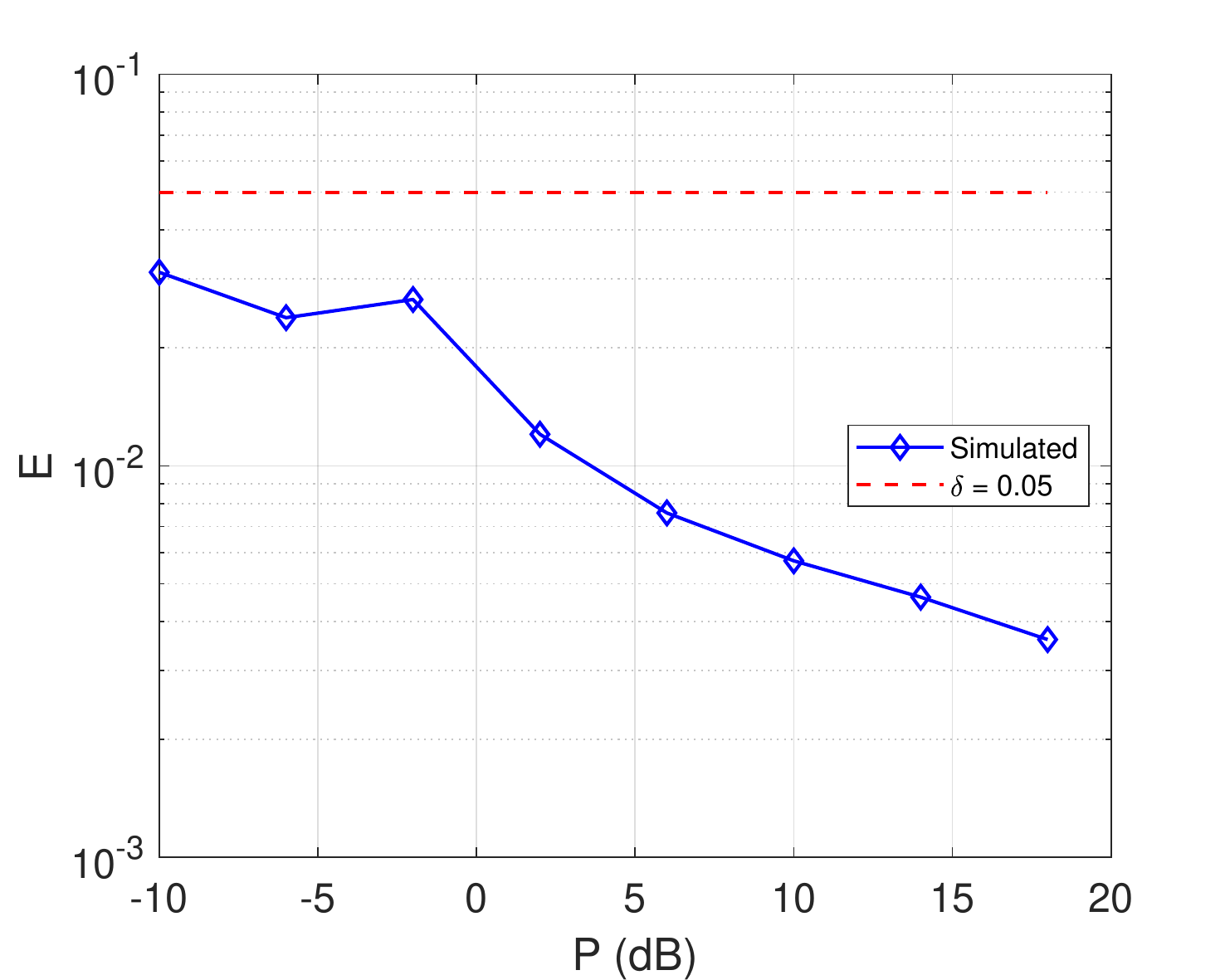}
   \caption{}
   \label{fig2c}
\end{subfigure}
~
\begin{subfigure}[c]{0.5\columnwidth} \centering
   \includegraphics[width=\columnwidth]{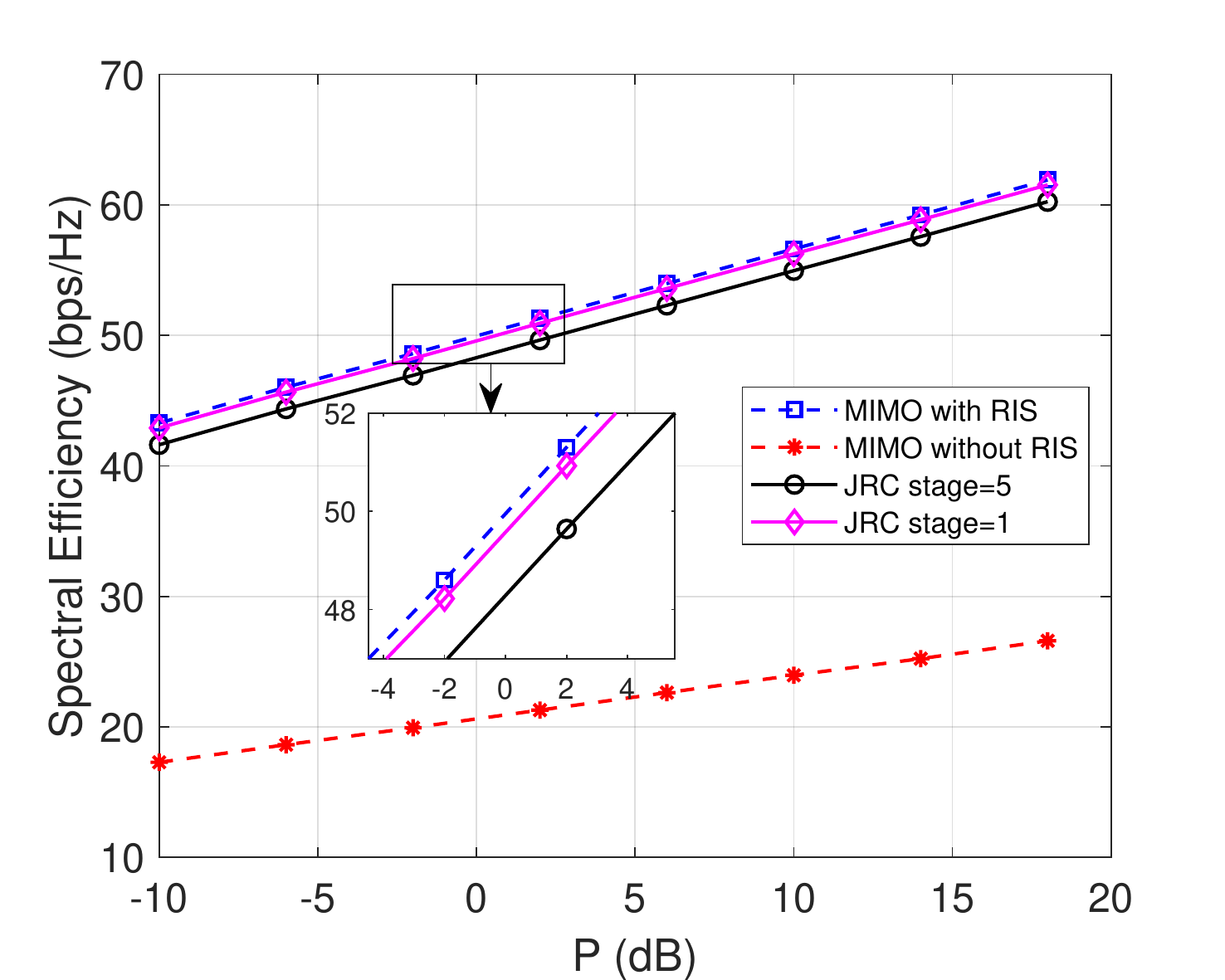}
   \caption{}
   \label{fig2d}
\end{subfigure}
\caption{ (a) Stagewise probability of error. (b) Required number of snapshots to achieve $\delta = 0.05$. (c) Probability of error. (d) Spectral efficiency. }

\end{figure*}

In this section, we illustrate the performance of the proposed target localization algorithm through numerical simulations. An inter{\color{black}-}element spacing of $\lambda/2$ and $\lambda/4$ are used for the ULA and RIS, respectively, where $\lambda$ is the signal wavelength. The search grid $\cV$ is obtained by choosing $D=32$ equally spaced points from $[-1,1]$. We choose $v_{tx}  = -0.4127$, and $v_{ty}  = 0.5397$, with corresponding angles being $\psi_{\rm rt} = 42.79^{0}$ and $\phi_{\rm rt} = -37.40^0$.    Pathloss for each path is modeled as $\eta_{\rm xy} = (\frac{\eta_{0}}{d_{xy}})^{\alpha_{xy}}$, where $\eta_{0} = -30~{\rm dB}$ is the reference path loss at $1~{\rm m}$, $\alpha_{xy}$ is the path loss exponent and $d_{xy}$ is the distance between the considered terminals in metres (${\rm m}$) with $x,y\in \{\rm b,r,u\}$. The number of elements reserved for localization in each stage is selected as 	$\left(L_1, L_2, L_3, L_4, L_5\right) = \left(4,8,16,16,16\right)$.
Details of the parameters used for the simulation are summarized in Table \ref{table:2}.   Since the SNR at the DFBS and UE vary with different stages of operation, we present the simulations results as a function of  $P=p_r + p_u$ with $\mP = 0.5\mI$, i.e., $p_r = p_u = 0.5P$.

In Fig. \ref{fig2a}, we present the probability of error for the first two stages of the proposed algorithm when we use $1$ snapshot for all stages (i.e., $T_s = 1,~\forall s$, indicated as \texttt{simulated, $T_s$ = 1})  and when we choose $T_s$ based on \eqref{eq:power_alloc} (indicated as \texttt{simulated, proposed}) to achieve a desired stagewise error probability of $\delta = 0.05$ for a specific $P$.

We can observe that it is not possible to achieve the desired error probability with a single snapshot, especially for the first stage due to the limited array gain for localization.  Fig. \ref{fig2b} shows the  value of $T_s$ for the first two stages computed using \eqref{eq:power_alloc}. As expected, the second stage requires fewer snapshots due to the increased array gain. Specifically, for the considered range of $P$, we found that $T_3=T_4=T_5=1$ would already achieve the desired error probability, and the total number of snapshots $\sum_{k=1}^{5}T_k$ is comparable to $T_1$.

Probability of error for the proposed algorithm after $N_{\rm s}$ stages is presented in Fig. \ref{fig2c}. We can observe that the proposed power-snapshot allocation presented in~\eqref{eq:power_alloc} ensures that the probability of error is  within the tolerance. For reasonable levels of target detection SNR, the number of snapshots required to perform localization using the proposed codebook is much less than what is required by the exhaustive search. For example, with $P = 6~{\rm dB}$ and $L_1 = 4$ and for the values of the receiver noise and path loss given in Table~\ref{table:2}, we get an approximate target detection SNR of $2~{\rm dB}$ at the DFBS. Total number of transmissions required by the proposed codebook is $(36+1+1+1+1)\times 4 = 160$. However, an exhaustive search scheme, where the space is scanned using pencil beams corresponding to each of the $D=32$ directions in the horizontal and vertical directions, requires $32\times 32 = 1024$ transmissions.

Impact of the localization operation on communications with the UE is characterized by computing the average spectral efficiency~(SE) of the UE, which is defined as $	{\rm SE} = \mbE\left[{\rm log_2} \,\, {\rm det} \left( \mI_{2} +  \frac{1}{\sigma_{u}^2} \mH_{\rm eff} \mH_{\rm eff}\rH     \right)  \right]$,
where $\mH_{\rm eff} = \mC\rH (\mH_{\rm bu} + \mH_{\rm ru}\boldsymbol{\Omega}\mH_{\rm br})\mF\mP$, and the expectation is carried out over different realizations of the small-scale fading coefficients, which are assumed to be Gaussian distributed with  unit variance.
The best SE is obtained when the least number of RIS elements are allocated for  localization.

 The SE of the UE for the proposed JRC system is presented in Fig. \ref{fig2d}. We can observe that the best case average SE obtained during Stage-1 is comparable to the benchmark scheme, which is  a conventional RIS-assisted MIMO communication system without radar sensing capabilities. This also means that the SE achievable by the UE is on par with that of a system without sensing capabilities when the DFBS is operating in the surveillance mode.  Similarly, we can observe that the worst case SE obtained in Stage-5 is about $3$ dB less than what is achieved by the benchmark scheme and is much better than the SE of a MIMO system without RIS.

\section{Conclusions} \label{sec:conclusions}
In this paper, we used  RIS to enable the coexistence between communication and radar sensing in a mmWave DFRC MIMO system. To simultaneously serve the UE and to localize the target, we dynamically partitioned the RIS into two parts for localization and communication. We developed a two-dimensional hierarchical codebook for target localization and presented a method to choose the number of snapshots to be used in each stage to achieve a desired target localization probability of error. For reasonable SNRs and target localization errors, the proposed algorithm requires fewer transmissions than an exhaustive search scheme. The average SE of the UE for the proposed algorithm is found to be comparable to that of a RIS-assisted MIMO communication system without sensing capabilities and is much better than that of traditional MIMO systems without RIS.

\bibliographystyle{ieeetran}
\bibliography{IEEEabrv,bibliography}

\end{document}